\newtheorem{theorem}{Theorem}
\newtheorem{corollary}{Corollary}
\newtheorem{lemma}{Lemma}
\def\ket#1{|#1\rangle}
\def\braket#1#2{\langle#1|#2\rangle}
\def\sectionaps#1{{\it#1}.~---}
\renewcommand{\leq}{\leqslant}
\renewcommand{\geq}{\geqslant}
\def\CT@@do@color{%
  \global\let\CT@do@color\relax
  \@tempdima\wd\z@
  \advance\@tempdima\@tempdimb
  \advance\@tempdima\@tempdimc
  \advance\@tempdimb\tabcolsep
  \advance\@tempdimc\tabcolsep
  \advance\@tempdima2\tabcolsep
  \kern-\@tempdimb
  \leaders\vrule
  \hskip\@tempdima\@plus  1fill
  \kern-\@tempdimc
\hskip-\wd\z@ \@plus -1fill }
\begin{document}

\title{How much randomness can be generated from a quantum black-box device?}

\author{Marie Ioannou}\affiliation{D\'epartement de Physique Appliqu\'ee, Universit\'e de Gen\`eve, 1211 Gen\`eve, Switzerland}
\author{Jonatan Bohr Brask}\affiliation{D\'epartement de Physique Appliqu\'ee, Universit\'e de Gen\`eve, 1211 Gen\`eve, Switzerland}\affiliation{Department of Physics, Technical University of Denmark, Fysikvej, Kongens Lyngby 2800, Denmark}
\author{Nicolas Brunner}\affiliation{D\'epartement de Physique Appliqu\'ee, Universit\'e de Gen\`eve, 1211 Gen\`eve, Switzerland}

\date{\today}

\begin{abstract}
Quantum theory allows for randomness generation in a device-independent setting, where no detailed description of the experimental device is required. Here we derive a general upper bound on the amount of randomness that can be generated in such a setting. Our bound applies to any black-box scenario, thus covering a wide range of scenarios from partially characterised to completely uncharacterised devices. Specifically, we prove that the number of random bits that can be generated is limited by the number of different input states that enter the measurement device. We show explicitly that our bound is tight in the simplest case. More generally, our work indicates that the prospects of generating a large amount of randomness by using high-dimensional (or even continuous variable) systems will be extremely challenging in practice. 
 \end{abstract}

\maketitle

Randomness is a characteristic feature of quantum theory. The unpredictability of measurements performed on a quantum system have deep implications for information processing, e.g. for quantum random number generation \cite{Herrero-collantes2014,Bera2016,Ma_review}, arguably one of the most developed applications of quantum information science.

The initial idea for devising a quantum random number generator (QRNG) consisted in sending a single quantum particle (say a photon) onto a balanced beam-splitter followed by two detectors \cite{rarity1994,stefanov2000,jennewein2000}. According to quantum theory, it is completely unpredictable on which detector each particle will arrive, thus resulting in a perfectly random bit. The simplicity of this scheme makes it well suited to experimental implementations, and current commercially available QRNGs are mostly based on this principle. In practice, however, the implementation of this scheme is much more challenging than it may appear at first sight. The reason is that any experimental implementation is prone to technical imperfections that introduce unavoidable noise. A rigorous characterization of the devices is therefore required in order to separate technical noise from true quantum randomness, which is often cumbersome and challenging in practice \cite{frauchiger2013,ma2013,haw2015,mitchell2015}. 

Interestingly, however, these problems can in principle be overcome by using a more general approach known as device-independent (DI) certification of quantum randomness. The main feature here is that a detailed description of the experimental devices is not required anymore, and that the user can estimate the amount of randomness generated (i.e. the entropy of the output) based on observed experimental data only, i.e. treating the measurement device as a ``black-box''. Several forms of DI protocols have been considered, featuring different levels of security and practicality. The highest level of security is achieved in the so-called fully DI approach, based on a loophole-free demonstration of quantum nonlocality \cite{colbeckPhD,pironio2010}. Alternative approaches, referred to as semi-DI (SDI), were developed for prepare-and-measure setups, much easier to implement in practice. These schemes typically require a general assumption on the quantum systems involved, for instance an upper bound on the Hilbert space dimension \cite{li2011,li2012} or on the energy \cite{Himbeck}, or a lower bound on the overlap between different states \cite{Brask}. Other approaches to partially DI QNRG have also been investigated, see e.g. \cite{Vallone2014,Law2014,Passaro2015,Cao2015,Cao2016,Marangon2017}.

Currently, there is a strong effort towards the implementation of DI and semi-DI QRNG. State-of-the-art laboratories have demonstrated fully DI QRNGs \cite{pironio2010,christensen2013,NIST,Pan,Kurtsiefer}. Semi-DI QRNG were also realized \cite{lunghi2015,canas2014,Xu2016}, and Ref. \cite{Brask} recently reported performance comparable to commercial devices. An important challenge, which has received considerable attention, is to find schemes that allow for the generation of as much randomness as possible. This then naturally raises the question of what the maximal amount of randomness that can be generated in a black-box scenario is. Here, we address this fundamental question.

Our main result is an upper bound on the amount of randomness that can be generated in any black-box scenario. Notably this bound applies to all scenarios where the measurement device is uncharacterized (i.e. represented by a black-box), hence covering in particular the DI and SDI cases. Specifically, we show that it is not only the number of measurements outcomes that limits the entropy of the output, but also the number of different input states which enter the measurement device. For a measurement device providing $l$ outputs and receiving $k$ different input quantum states, the number of random bits that can be generated is upper bounded by $ \log_2 ( \text{min}\{l,k+1\} )$. Moreover, we show that our bound is tight for the simplest possible scenario. Considering a SDI scheme with only $k=2$ states, and a lower bound on their overlap, we give an explicit scheme where $\log_2(3)$ bits can be certified. Finally, we conclude with a discussion on the implications of our results.

\section{Basics}

For clarity we will present our result in the SDI picture, considering a prepare-and-measure scenario. The preparation device takes as input $x\in\{0, 1, ... , k-1\}$ and emits a quantum state $\rho_{x}$. The emitted state is sent to a measurement device, where a measurement is selected via an input $y\in\{0,...,m-1\}$. The selected measurement is performed and provides an outcome $b\in\{0,...,l-1\}$. The observed statistics is given by the probabilities 
\begin{equation}
p(b|x,y)=\operatorname{Tr}[M_{b|y}\rho_x],
\end{equation}
where $M_{b|y}$ are the element of a positive-operator-valued measure (POVM) describing the quantum measurements. Importantly, in this picture the observer chooses the inputs $x$ and $y$ and records the output $b$, but does not necessarily know what quantum states $\rho_x$ and measurements $M_{b|y}$ are actually being implemented inside the black boxes.

In order to certify randomness in this SDI scenario, one needs to limit the set of possible states $\rho_x$ that can be prepared. If not, all possible distributions $p(b|x,y)$ can be obtained, by simply encoding the input $x$ in a set of $k$ orthogonal quantum states, or equivalently by using $\log_2(k)$ bits of classical communication. Hence, the observed data does not enable any differentiation between classical and quantum behaviours of the devices, and no randomness can be certified. Several possibilities to limit the set of prepared quantum states have been investigated, such as bounds on the Hilbert space dimension, the energy, or the overlap. Here this choice is not important, as our result will apply in full generality, irrespective of which specific assumption is considered. In particular, the states could be completely characterised. We also note that, although we do not explicitly account for classical or quantum side information in the following, such side information can only decrease the amount of certifiable randomness. Hence our upper bound also applies to scenarios with side information.

The next question is how to quantify the amount of randomness that is being generated, that is, how much genuine randomness does the output $b$ contain? This can be done by deriving an upper bound on the probability that any observer (including a potential adversary) has to predict the output $b$. Importantly, the adversary can have complete knowledge of the inner workings of the devices, i.e. know exactly what the prepared states $\rho_x$ and the measurements $M_{b|y}$ are. Typically, works on black-box randomness generation quantify the randomness via the min-entropy $H_{min}=-\log_2(p_{guess})$ \cite{Konig2009}, where $p_{guess}$ is the probability that any observer has to correctly guess the output $b$.

In this work, our goal is to derive a general upper bound on $H_{min}$. Clearly, one must have that $H_{min} \leq \log_2(l)$, as one can always simply guess $b$ at random. It is natural then to ask if this bound can be attained in general. This would be of particular interest for setups where the output alphabet is very large (or even infinite) as in continuous variable (CV) optics implementations, see e.g.~\cite{gabriel2010,Qi2010,symul2011}, thus leading to the certification of a large number of random bits in each round. We will show however that this is not possible in general, as the min-entropy $H_{min}$ depends not only on the properties of the measurement device, but also on the preparation device. Specifically, we show that the number of different preparations $k$ limits the entropy: $H_{min} \leq \log_2(k+1)$.

Before discussing our main result, we introduce some notation. For our analysis, it will be enough to consider finite-dimensional systems, i.e. qudits. These can be conveniently characterized via a generalized Bloch-sphere representation \cite{Aerts2014}:
\begin{equation}
\label{eq.generalisedbloch}
\rho=\frac{1}{d}\left(\mathbb{1}+c_d\vec{n}\cdot\vec{\sigma}\right),
\end{equation}
where $\vec{n}\in\mathbb{R}^{d^2-1}$, $c_d=\sqrt{\frac{d(d-1)}{2}}$ and $\vec{\sigma}$ is a vector of the generalized Gell-Mann matrices (for $d=2$, this a vector of the three Pauli matrices). These $d^2-1$ matrices are traceless and form an orthogonal basis for the space of $d\times d$ Hermitian matrices, i.e. $\operatorname{Tr}[\sigma_i]=0$ and $\operatorname{Tr}[\sigma_i\sigma_j]=2\delta_{i,j}$. From this, it follows that $\rho$ is self-adjoint and has unit trace. However, it is not guaranteed to be positive semidefinite unless further restrictions are placed on $\vec{n}$. For pure states, a simple criterion can be stated in terms of the so-called star product, defined by $(\vec{u}\star\vec{v})_i=\frac{c_d}{d-2}\sum_{j,k=1}^{d^2-1}d_{ijk}u_jv_k$ where $d_{ijk}$ is a symmetric tensor given by the structure constants of the Lie algebra of $SU(d)$. The expression \eqref{eq.generalisedbloch} represents a valid pure state if and only if $|\vec{n}|=1$ and $\vec{n}\star\vec{n}=\vec{n}$ \cite{goyal2016}.

A measurement is represented by a POVM acting on this $d$-dimensional Hilbert space. Considering rank-1 POVMs with $N$ outcomes $\mathbb{P}_N$, there exists a set of positive coefficients $\{\lambda_b\}$ such that $\mathbb{P}_N=\{\lambda_bE_b\}=\{M_b\}$ and $\sum_{b=1}^N\lambda_bE_b=\mathbb{1}$ where $E_b$ are rank-1 projectors (see e.g.~\cite{Sentis2013}). Using the generalized Bloch-sphere representation, $E_b$ can be written as
\begin{equation}\label{POVM}
E_b=\frac{1}{d}\left(\mathbb{1}_d+c_d\vec{v}_b\cdot\vec{\sigma}\right),
\end{equation}
where again $\vec{v}_b\in\mathbb{R}^{d^2-1}$ is a unit vector satisfying $\vec{v}_b\star\vec{v}_b=\vec{v}_b$. The validity of a rank-1 POVM is ensured by
\begin{align} \label{dcond}
\sum_b\lambda_b&=d , \quad \sum_b\lambda_b\vec{v}_b=\vec{0} 
, \quad
\lambda_b\geq0 
\,.
\end{align}

Finally, we will need to consider convex combinations of POVMs, and POVMs with different numbers of outcomes. Given two POVMs $\mathbb{P}^{(1)}$, and $\mathbb{P}^{(2)}$, their convex combination $p\mathbb{P}^{(1)}+(1-p)\mathbb{P}^{(2)}$ is a POVM with the $i^{th}$ element given by $p M_i^{(1)} + (1-p) M_i^{(2)}$, for some $p \in [0,1]$. A POVM is called extremal when it cannot be expressed as a convex combination of other POVMs. Any POVM can be decomposed into extremals, and since convex combinations can be obtained via classical postprocessing, clearly no POVM can generate more randomness than the best extremal entering in its decomposition. Thus, while we consider a scenario with $l$-outcome measurements, it will be interesting to consider POVMs that can be decomposed into extremals with fewer outcomes. By $\mathbb{P}_N$ we denote a POVM with $N$ non-zero elements (and thus $l-N$ zero elements), and by $\mathcal{P}_N$ we denote the set of POVMs which can be written as convex combinations of $N$-outcome POVMs.

\section{Main result}

Our main result is a general upper bound on the amount of randomness that can be generated in a black-box scenario. Below we state and prove the result for the SDI prepare-and-measure scenario. Then we discuss the extension to the fully DI scenario, based on a Bell test.

Let us first give the intuition behind the result. In the black-box scenario, any setup featuring $k$ different prepared quantum states can always be modeled by considering a set of states $\rho_x$ living in a Hilbert space of dimension $d=k$, i.e. $\mathbb{C}^k$. In turn, this implies that the measurement operators $\{M_{b|y}\}$ can also be considered to act on $\mathbb{C}^k$. Now, any POVM acting on $\mathbb{C}^k$ can be simulated from extremal POVMs and classical post-processing. As any extremal POVM acting on $\mathbb{C}^k$ features at most $k^2$ outcomes \cite{Dariano}, it follows directly that no more than $ 2 \log_2(k)$ random bits can be generated per round. 

This simple argument explains why randomness is bounded by the number of possible preparations $k$. However, the specific bound is far from being tight. Intuitively, with only $k$ preparations, their correponding Bloch vectors span a $k$-dimensional real space and any component of the measurements acting outside this space will not contribute to randomness generation, so only a subset of POVMs acting on $\mathbb{C}^k$ will be relevant. Indeed, this is the case, as we show below. For the relevant subset, we find that all extremal POVMs have at most $k+1$ outputs, and it follows that no more than $\log_2(k+1)$ bits can actually be generated.

We note that any POVM element with rank higher than one can alwys be decomposed into a combination of rank-1 operators. By assigning separate outcomes to these operators, one obtains a rank-1 POVM with additional outcomes. The original POVM can be obtained from this larger POVM by classical post-processing (by binning several outcomes together) \cite{Haapsalo2012,Sentis2013}. Since classical post-processing cannot increase the amount of randomness, we can restrict our analysis to rank-1 POVMs.

\begin{theorem}
In a prepare-and-measure setup with $k$ prepared states, and $m$ measurements providing $l$ outputs, one can generate at most $ \log_2 ( \text{min}\{l,k+1\} )$ random bits per round.
\label{mainth}
\end{theorem}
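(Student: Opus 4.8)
The plan is to establish the two bounds separately and combine them. The inequality $H_{min}\leq\log_2 l$ is immediate, since an observer can always guess $b$ uniformly at random among the $l$ outputs, giving $p_{guess}\geq 1/l$. The substance is the bound $H_{min}\leq\log_2(k+1)$, i.e. $p_{guess}\geq 1/(k+1)$, which I would prove by exhibiting, for the measurement $y^{*}$ used to extract randomness, a decomposition of its POVM into a convex mixture of POVMs each having at most $k+1$ effective outcomes. As noted in the excerpt, I may assume $d=k$ and that all POVMs are rank-1, $M_b=\lambda_b E_b$ with $E_b=\frac{1}{k}(\mathbb{1}+c_k\vec v_b\cdot\vec\sigma)$.

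First I would isolate the information that actually reaches the statistics. Writing $\rho_x=\frac{1}{k}(\mathbb{1}+c_k\vec n_x\cdot\vec\sigma)$, a direct computation gives $\Tr[M_b\rho_x]=\lambda_b(\tfrac{1}{k}+\tfrac{2c_k^2}{k^2}\vec v_b\cdot\vec n_x)$, so the observed data depend on $M_b$ only through $\lambda_b$ and the projection of $\vec v_b$ onto $W:=\operatorname{span}_{\mathbb{R}}\{\vec n_x\}$. Since there are $k$ states, $r:=\dim W\leq k$, and all relevant operators lie in the subspace $S$ of Hermitian matrices spanned by $\mathbb{1}$ together with the $\vec\sigma$-directions in $W$, with $\dim S=r+1\leq k+1$. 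Crucially, each $\rho_x$ lies in $S$, so the behaviour of a POVM is captured entirely by the projected operators $P_S M_b\in S$, which satisfy $\sum_b P_S M_b=\mathbb{1}$ and $\Tr[P_S M_b]=\lambda_b$.

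Next I would run a Carath\'eodory/barycentric argument in $S$. Setting $\hat A_b=P_S M_b/\lambda_b$, the $\hat A_b$ are trace-one elements of the $r$-dimensional affine slice $\{T\in S:\Tr T=1\}$, carry weights $\lambda_b/k$ summing to one, and have fixed barycenter $\frac{1}{k}\sum_b\lambda_b\hat A_b=\frac{1}{k}\mathbb{1}$. The set of probability weightings of the points $\{\hat A_b\}$ with this prescribed barycenter is a polytope whose extreme points are supported on affinely independent subsets, hence on at most $r+1\leq k+1$ outcomes; peeling off these extreme points expresses the given measurement as a convex mixture of measurements each with at most $k+1$ non-zero outcomes and the same behaviour on all $\rho_x$. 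Granting this, the mixture is a valid adversarial strategy: conditioned on a branch $\mu$ the outcome distribution $\{\Tr[M^{\mu}_b\rho_{x^{*}}]\}_b$ is supported on at most $k+1$ outcomes and sums to one, so its largest entry is at least $1/(k+1)$; guessing that outcome yields $p_{guess}\geq 1/(k+1)$, and combining with the trivial bound gives $H_{min}\leq\log_2(\min\{l,k+1\})$.

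The step I expect to be the main obstacle is realizability: the barycentric decomposition is carried out at the level of projected data $P_S M_b$, and I must certify that each low-outcome branch corresponds to a genuine POVM on $\mathbb{C}^k$ --- positive operators summing to $\mathbb{1}$ --- reproducing the same statistics. The naive lift, rescaling the original projectors as $M^{\mu}_b\propto E_b$, reproduces the $S$-data but generally overshoots $\mathbb{1}$ in the orthogonal block $S^{\perp}$, and subtracting that excess can spoil positivity. I would therefore spend the effort here, either by distributing the $S^{\perp}$-correction among the branch operators so as to preserve positivity (using that the states, and hence the constraints, are blind to $S^{\perp}$), or by invoking the closedness of the set of realizable behaviours together with the fact that each $\hat A_b$ is itself the projection of an actual density matrix. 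Everything else --- the reductions, the dimension count $\dim S\leq k+1$, and the final averaging --- is routine once this positivity bookkeeping is in place.
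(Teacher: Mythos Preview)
Your proposal follows essentially the same route as the paper. Both arguments (i) reduce to $d=k$ and rank-1 POVMs, (ii) observe that only the components of the Bloch vectors $\vec v_b$ lying in $W=\operatorname{span}\{\vec n_x\}$ enter the statistics, and (iii) use convex geometry in this $\leq k$-dimensional real space to decompose the measurement into pieces supported on at most $k+1$ outcomes, from which $p_{guess}\geq 1/(k+1)$ follows by averaging. The only technical difference is the decomposition step: you invoke Carath\'eodory on the polytope of weightings $\{\lambda_b/k\}$ with fixed barycenter $\mathbb{1}/k$ in the trace-one slice of $S$, while the paper uses a conic-combination fact (in $\mathbb{R}^d$, among $d+2$ vectors one is a conical combination of $d$ others) together with an explicit lemma that, given such a conic relation, redistributes the weights $\lambda_b$ to peel off two $(l-1)$-outcome POVMs. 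These are two phrasings of the same vertex-peeling argument; both keep the projectors $E_b$ fixed and only change the weights, so positivity of the branch elements is automatic.

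The obstacle you flag --- that the branch operators $\{\lambda_b^{(\mu)}E_b\}$ sum to $\mathbb{1}$ only in the $S$-block and may miss the identity in $S^{\perp}$ --- is precisely the point at which the paper is also terse. The paper handles it by asserting up front that ``it is sufficient in general to consider POVMs whose Bloch vectors $\vec v_{b|y}$ live in the space spanned by $\{\vec n_0,\dots,\vec n_{d-1}\}$''; once that restriction is granted, the conic relation holds for the full vectors and the lemma's check $\sum_b\lambda_b^{(1)}\vec v_b=\vec 0$ goes through verbatim, so sum-to-identity is exact. In your language this is the same as assuming the $\vec v_b$ already lie in $W$, in which case your barycenter constraint in $S$ is the full constraint and every Carath\'eodory vertex is a genuine POVM. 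So you have not missed anything the paper supplies: the realizability step you single out is exactly the reduction the paper makes in one sentence, and your proposed fixes (distributing the $S^{\perp}$-correction, or arguing that the adversary may replace the device by one with $\vec v_b\in W$ since the certified statistics cannot distinguish them) are the natural ways to spell out that sentence.
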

\begin{proof}
The bound $H_{min} \leq \log_2 ( l)$ trivially follows from the fact that an observer can simply guess at random the output $b$. The main aspect of the proof is therefore to show that $H_{min} \leq \log_2 ( k+1)$. Also note that the following arguments holds for any $y$, and thus the bound holds irrespective of $m$.

Given $k$ different prepared quantum states, we can without loss of generality consider that all states $\rho_x$ act on a Hilbert space dimension $d=k$. The statistics can thus be expressed using the generalized Bloch-sphere representation:
\begin{equation}
\begin{aligned}
p(b|x,y)&=\operatorname{Tr}[M_{b|y}\rho_x]\\
&=\operatorname{Tr}\left[\frac{\lambda_{b|y}}{d}\left(\mathbb{1}_d+c_d\vec{v}_{b|y}\cdot\vec{\sigma}\right)\frac{1}{d}\left(\mathbb{1}_d+c_d\vec{n}_x\cdot\vec{\sigma}\right)\right]\\
&=\frac{\lambda_{b|y}}{d}(1+(d-1)\vec{v}_{b|y}\cdot\vec{n}_x).
\end{aligned}
\end{equation}
First, note that the components of $\vec{v}_{b|y}$ orthogonal to $\vec{n}_x$ will not contribute to the statistics. Therefore, it is sufficient in general to consider POVM's whose Bloch-vectors $\vec{v}_{b|y}$ live in the space spanned by $\{\vec{n}_0,...,\vec{n}_{d-1}\}$. Secondly, as $p(b|x,y)$ is linear in $M_{b|y}$, it is sufficient to focus on extremal POVMs. Indeed, if $M_{b|y}$ is not extremal, i.e. it can be written as a convex combination $M_{b|y}=pM_{b|y}^{(1)}+(1-p)M_{b|y}^{(2)}$ with $p\in[0,1]$, then $M_{b|y}$ cannot generate more randomness than $M_{b|y}^{(1)}$ or $M_{b|y}^{(2)}$.

To summarize, we need to focus on extremal rank-1 POVMs with Bloch-vectors $\vec{v}_b$ living in the $d$-dimensional space spanned by $\{\vec{n}_0,...,\vec{n}_{d-1}\}$. Specifically, we would like to determine the maximal number of outputs of any these POVMs. In the following we will show that this maximal number is $d+1$.

In $\mathbb{R}^d$ one needs $d$ vectors to span a solid angle. Given $d+1$ vectors either \textit{(i)} one of them lies in the solid angle spanned by the others and is thus a conical combination of them, or \textit{(ii)} the solid angles spanned by all the possible subsets of $d$ vectors cover the entire $(d-1)$-sphere. Hence, any additional vector will necessarily fall in the solid angle spanned by $d$ of the original vectors and thus be a conical combination of them. Thus, in dimension $d$, given $d+2$ or more vectors, at least one is always a conical combination of $d$ others. The theorem then follows from the following lemma by induction. 
\end{proof}

\begin{lemma}
Given a rank-1 POVM with $l$ outputs $\mathbb{P}_{l}$, if one of the generalized Bloch-vectors is a conical combination of $l'\leq l-1$ of the others, then the POVM can be written as a convex combination of two rank-1 POVMs with $l-1$ outcomes each, i.e. $\mathbb{P}_{l}=p\mathbb{P}_{l-1}^{(1)}+(1-p)\mathbb{P}_{l-1}^{(2)}$.
\label{mainlemma}
\end{lemma}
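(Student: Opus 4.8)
The plan is to build an explicit one-parameter family of perturbations that rescales the weights $\lambda_b$ while keeping every Bloch vector $\vec{v}_b$ fixed, and to show it splits $\mathbb{P}_l$ into two valid POVMs each with a vanishing element. Relabel so that the dependent vector is $\vec{v}_1=\sum_{i\in S}\mu_i\vec{v}_i$ with $S\subseteq\{2,\dots,l\}$, $|S|=l'$ and $\mu_i>0$. Since $\mathbb{P}_l$ has $l$ nonzero elements, all $\lambda_b>0$. First I would look for coefficients $c_b$, not all zero, giving an admissible ``direction'' $\sum_b c_b E_b=0$. In the Bloch representation this is equivalent to the two conditions $\sum_b c_b=0$ and $\sum_b c_b\vec{v}_b=\vec{0}$; then $\lambda_b(t)=\lambda_b+tc_b$ automatically preserves both normalization conditions in \eqref{dcond}, and keeping the directions $\vec{v}_b$ fixed preserves both the rank-1 property and the purity constraint $\vec{v}_b\star\vec{v}_b=\vec{v}_b$.

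The naive choice $c_1=1$, $c_i=-\mu_i$ for $i\in S$, and $0$ otherwise annihilates $\sum_b c_b\vec{v}_b$ but leaves $\sum_b c_b=1-s$, where $s:=\sum_{i\in S}\mu_i$. The key step is to repair this trace mismatch using the POVM condition $\sum_b\lambda_b\vec{v}_b=\vec{0}$: adding $\beta\lambda_b$ to each $c_b$ leaves $\sum_b c_b\vec{v}_b$ unchanged, while $\sum_b c_b=(1-s)+\beta d$, so choosing $\beta=(s-1)/d$ restores $\sum_b c_b=0$. Here I would invoke that the $\vec{v}_b$ are unit vectors and the $\mu_i\geq0$, so by the triangle inequality $s=\sum_i\mu_i\geq|\sum_i\mu_i\vec{v}_i|=|\vec{v}_1|=1$; hence $\beta\geq0$ and $c_1=1+\beta\lambda_1\geq1>0$, which guarantees $c\neq0$.

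With $\sum_b c_b=0$ and $c\neq0$ there exist indices with $c_b>0$ and with $c_b<0$. I then move along $\pm c$ until the first weight vanishes, setting $t_+=\min_{c_b>0}\lambda_b/c_b$ and $t_-=\min_{c_b<0}\lambda_b/|c_b|$, both strictly positive because every $\lambda_b>0$. This produces $\mathbb{P}_{l-1}^{(1)}=\{(\lambda_b-t_+c_b)E_b\}$ and $\mathbb{P}_{l-1}^{(2)}=\{(\lambda_b+t_-c_b)E_b\}$. Each has nonnegative weights by the choice of $t_\pm$, at least one zero element (hence at most $l-1$ outcomes), rank-1 elements, and satisfies \eqref{dcond} since $\sum_b c_b E_b=0$. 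Finally $\mathbb{P}_l=p\,\mathbb{P}_{l-1}^{(1)}+(1-p)\,\mathbb{P}_{l-1}^{(2)}$ with $p=t_-/(t_++t_-)\in(0,1)$, which I would verify by checking $-pt_++(1-p)t_-=0$.

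The main obstacle is exactly the trace-normalization mismatch $\sum_b c_b=1-s\neq0$ left by the geometric dependence among the $\vec{v}_b$: the content of the lemma lies in recognizing that this scalar defect can be cancelled by the uniform correction $\beta\lambda_b$, and that the nonnegativity of the conical combination forces $s\geq1$, i.e.\ $\beta\geq0$, which in turn certifies $c\neq0$. Once that observation is in place, the remainder is routine bookkeeping of positivity and normalization.
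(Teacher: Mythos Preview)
Your argument is correct and, once unpacked, produces exactly the same decomposition as the paper's proof: both constructions keep the projectors $E_b$ fixed and perturb only the weights $\lambda_b$ along the one-dimensional affine family determined by the conical relation, then take the two endpoints where a weight first hits zero. The paper presents this as a direct construction---first build $\mathbb{P}_{l-1}^{(1)}$ by setting the dependent weight to zero and renormalising, then define $\mathbb{P}_{l-1}^{(2)}$ as the residual---whereas you frame it as the standard extremality argument of finding a null direction $\sum_b c_b E_b=0$ and sliding to the boundary of the positive cone. Your additional observation that the triangle inequality forces $s\geq 1$ (hence $\beta\geq 0$, hence $c_1>0$) is a clean way to certify nontriviality of the direction; the paper does not need this because it verifies $\lambda_b^{(1)}\geq 0$ directly from the explicit formula, but the two routes coincide. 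In short: same approach, different packaging.
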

\begin{proof}
Let us consider a rank-1 POVM with $l$ elements $\mathbb{P}_{l}=\{M_b\}$, $b=0,...,l-1$. The POVM elements are given by $M_b=\lambda_b E_b$ where $E_b$ are expressed in the generalized Bloch-like representation \eqref{POVM}. The parameters $\lambda_b$ and $\vec{v}_b$ satisfy the conditions \eqref{dcond} such that the $M_b$'s form a valid POVM.

First, the operation consists in extracting $\mathbb{P}_{l-1}^{(1)}$ from $\mathbb{P}_{l}$. Without loss of generality, we make the assumption that $\vec{v}_0$ is a conical combination of $l-1$ vectors, 
\begin{equation}
\vec{v}_0=\sum_{b=1}^{l-1}c_b \vec{v}_b
\end{equation}
with $0\leq c_b$. The parameters $\lambda_b^{(1)}$ and $\vec{v}_b^{(1)}$ of $M_b^{(1)}$ are given by
\begin{equation}
\begin{aligned}
\vec{v}_b^{(1)}&=\vec{v}_b , \quad
\lambda_0^{(1)}=0 , \quad
\lambda_b^{(1)}=\frac{1}{N}(\lambda_b+\lambda_0c_b),
\end{aligned}
\end{equation}
where $N$ is a normalization coefficient to be fixed in order to satisfy the first condition in \eqref{dcond}. The second condition in \eqref{dcond} is also fulfilled,
\begin{equation}
\begin{aligned}
\sum_{b=0}^{l-1}\lambda^{(1)}_b\vec{v}_b&=\lambda_0^{(1)}\vec{v}_0+\sum_{b=1}^{l-1}\frac{1}{N}(\lambda_b+\lambda_0c_b)\vec{v}_b\\
&=\frac{1}{N}\left(\sum_{b=1}^{l-1}\lambda_b\vec{v}_b+\lambda_0\sum_{b=1}^{l-1}c_b\vec{v}_b\right)\\
&=\frac{1}{N}\left(-\lambda_0\vec{v}_0+\lambda_0\vec{v}_0\right)=\vec{0}.
\label{zerocond1}
\end{aligned}
\end{equation}
The last condition is straightforward to verify, i.e. $\lambda_b^{(1)}\geq0$. The first step is done, $\mathbb{P}_{l-1}^{(1)}=\{M_b^{(1)}\}$ is a valid POVM.

Next, the coefficient of the convex combination $p$ is defined as follows
\begin{equation}
p=\min_b \frac{\lambda_b}{\lambda_b^{(1)}}.
\label{convexcoeff}
\end{equation}
Here, $p\in[0,1]$ since $\sum_{b=0}^{l-1}\lambda_b=\sum_{b=0}^{l-1}\lambda_b^{(1)}=d$.

Finally, $\mathbb{P}_{l-1}^{(2)}$ can be fixed by defining the parameters of $M_b^{(2)}$ as 
\begin{equation}
\begin{aligned}
\vec{v}_b^{(2)}&=\vec{v}_b  , \quad
\lambda_b^{(2)}=\frac{\lambda_b-p\lambda_b^{(1)}}{1-p} .
\end{aligned}
\end{equation}
Assuming that the minimum of \eqref{convexcoeff} occurs for $b^\ast$, this implies $\lambda_b^\ast=0$ and thus a POVM with $m-1$ outcomes. Let us check that the first condition of \eqref{dcond} is fulfilled 
\begin{equation}
\sum_{b=0}^{l-1}\lambda_b^{(2)}=\frac{1}{1-p}(d-pd)=d.
\end{equation}
Using \eqref{dcond} and \eqref{zerocond1} it is straightforward to verify that
\begin{equation}
\sum_{b=0}^{l-1}\lambda_b^{(2)}\vec{v}_b^{(2)}=\sum_{b=0}^{l-1}\frac{\lambda_b-p\lambda_b^{(1)}}{1-p}\vec{v}_b=\vec{0}.
\end{equation}
The positivity of $\lambda_b^{(2)}$ is ensured by the choice of $p$ \eqref{convexcoeff}. Hence, $\mathbb{P}_{l-1}^{(2)}=\{M_b^{(2)}\}$ is also a valid POVM. And by construction, $\mathbb{P}_{l}$ is a convex combination of the two extracted POVM's, $\mathbb{P}_{l}=p\mathbb{P}_{l-1}^{(1)}+(1-p)\mathbb{P}_{l-1}^{(2)}$.
\end{proof}

Theorem 1 is also relevant in the context of randomness generation in the fully DI scenario. Consider a Bell test with two spatially separated parties, Alice and Bob, sharing a quantum state $\rho\in \mathcal{C}^{d}\otimes\mathcal{C}^{d}$. Upon receiving an input $x$ for Alice and $y$ for Bob, they output $a$ and $b$ respectively. When Alice performs measurement $x$ and obtains output $a$ $M_{a|x}$, Bob's system is steered into the (unnormalised) state
\begin{equation}
\sigma_{a|x}=\operatorname{Tr}_A[\rho (M_{a|x}\otimes\mathbb{1}_B)],
\end{equation}
where $M_{a|x}$ is the POVM element for Alice's measurement. Bob can thus receive at most $\vert x \vert{\cdot}\vert a \vert$ different states. From Theorem 1, it then directly follows that:
\begin{corollary}
Consider a Bell scenario with Alice having $|x|$ inputs and $|a|$ outputs, and Bob any number of inputs with $|b|$ outputs. Then, Bob's measurement can generate at most $\log_2( \text{min}\{ |b|, \vert x \vert{\cdot}\vert a \vert+1 \}  )$ random bits per round.
\label{maincorollary}
\end{corollary}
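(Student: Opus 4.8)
The plan is to reduce the Bell scenario to the prepare-and-measure setting of Theorem~\ref{mainth}, treating the states into which Bob's subsystem is steered as the effective preparations entering his measurement device. First I would make this identification precise: for each choice of Alice's input $x$ and outcome $a$, Bob's conditional (normalized) state is $\sigma_{a|x}/p(a|x)$, with $\sigma_{a|x}=\operatorname{Tr}_A[\rho(M_{a|x}\otimes\mathbb{1}_B)]$. Ranging over all input--output pairs $(x,a)$, these states constitute an ensemble of at most $|x|\cdot|a|$ distinct preparations.

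Second, I would observe that Bob's measurement $\{M_{b|y}\}$ acts on this ensemble exactly as the measurement acts on the prepared states $\rho_x$ in Theorem~\ref{mainth}. The conditional statistics $p(b|a,x,y)=\operatorname{Tr}[M_{b|y}\,\sigma_{a|x}]/p(a|x)$ have precisely the prepare-and-measure form, with the single preparation label now played by the composite index $(x,a)$. Hence the effective number of preparations is $k=|x|\cdot|a|$ and the number of Bob's outputs is $l=|b|$, so the identification maps the Bell setup onto an instance of the prepare-and-measure scenario already analysed.

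Third, I would invoke Theorem~\ref{mainth} verbatim under these identifications. Since the steered states span a real Bloch subspace of dimension at most $k=|x|\cdot|a|$, the same extremal-decomposition argument—Lemma~\ref{mainlemma} applied inductively—shows that Bob's measurement is, up to classical post-processing, equivalent to one with at most $k+1=|x|\cdot|a|+1$ relevant outcomes. Combined with the trivial bound $\log_2|b|$ obtained by guessing at random, this yields $H_{min}\leq\log_2(\min\{|b|,|x|\cdot|a|+1\})$, which is the claim.

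The point requiring most care is the counting-and-normalization step: I would verify that replacing the prepare-and-measure device's preparations by the \emph{conditional} steered states leaves the relevant state count unchanged, and that passing from the unnormalized $\sigma_{a|x}$ to its normalization rescales probabilities without altering which Bloch directions are populated (so that the span, and hence the effective dimension $k$, is unaffected). Finally, since any side information held by an adversary—including a purification of $\rho$ held by Eve—can only decrease the certifiable randomness, as already noted for Theorem~\ref{mainth}, the bound derived from the effective preparation count remains valid as an upper bound in the fully device-independent setting.
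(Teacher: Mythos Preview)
Your proposal is correct and follows essentially the same approach as the paper: the paper also reduces the Bell scenario to the prepare-and-measure setting by noting that Alice's measurement steers Bob into at most $|x|\cdot|a|$ distinct states $\sigma_{a|x}=\operatorname{Tr}_A[\rho(M_{a|x}\otimes\mathbb{1}_B)]$, and then invokes Theorem~\ref{mainth} directly. Your additional remarks on normalization, the Bloch-span, and side information are sound elaborations but do not depart from the paper's argument.
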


\section{Tight bound for two preparations}

Our main result, Theorem 1, is a general upper bound on the output entropy that can be certified. It is thus natural to ask whether this bound is tight. Here we consider the simple case of a SDI prepare-and-measure setup with $k=2$ preparations and a single ternary measurement, i.e. with outputs $b\in\{0,1,2\}$. We show that the maximal number of certifiable random bits $H_{min}=\log_2(3)$ can be achieved.

\begin{figure}[b]
\begin{center}
\includegraphics[width=\columnwidth]{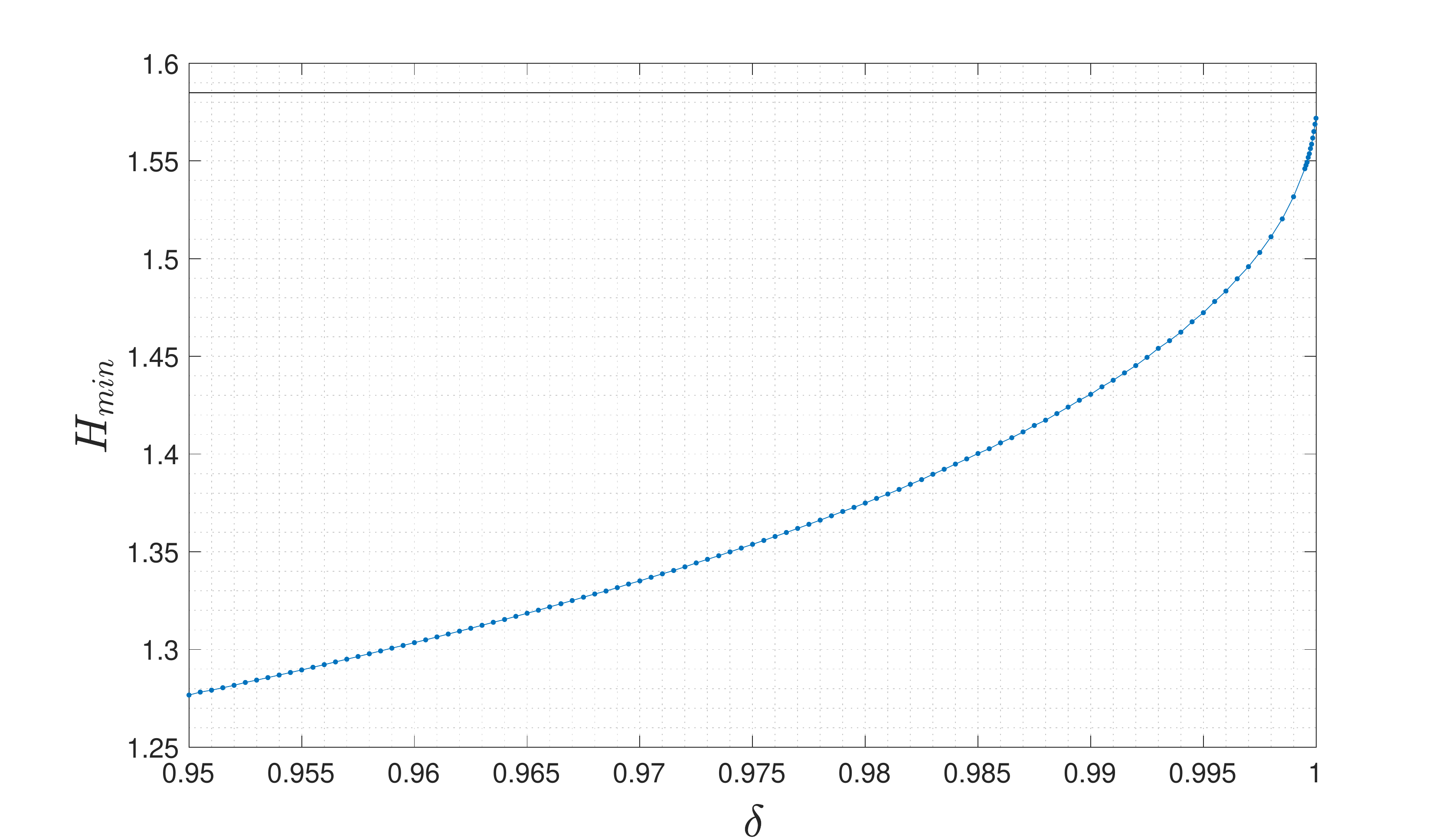}
\caption{Plot of the output entropy $H_{min}$ as a function of the overlap $\delta$ of the two prepared states. In the limit of almost indistinguishable states $\delta \rightarrow 1$, the output entropy becomes maximal, i.e. $H_{min}=\log_2(3)$ (horizontal line). This shows that the bound of Theorem 1 can be attained in this case.}
\label{optimalstrategy}
\end{center}
\end{figure}

Consider a preparation device emitting two possible states $\ket{\psi_x}$ with $x\in\{0, 1\}$. Following Ref. \cite{Brask} we consider an assumption on the distinguishability of the two states, specifically we lower bound their overlap $|\braket{\psi_0}{\psi_1}| \geq \delta$. Note that such an assumption is well suited for optical setups, as it corresponds to an upper bound on the intensity of the light source.

Without loss of generality, the two qubit preparations are given by Bloch vectors $\{\vec{n}_0,\vec{n}_1\}$ in the $xz-$plane of the Bloch sphere, distributed symmetrically around the $z$ axis. From Theorem \ref{mainth}, we can focus on extremal ternary POVMs $\mathbb{P}_3=\{M_1,M_2,M_3\}$ such that all Bloch vectors are in the the $xz-$plane. Specifically, we consider POVMs of the form \begin{align}
M_b&=\frac{\lambda_b}{2}(\mathbb{1}_2+\vec{\sigma}\cdot\vec{u}_b),
\end{align}
such that $\sum_{b=0}^2\lambda_b=2$, $\label{sum2}$, $\sum_{b=0}^2\lambda_b\vec{u}_b=0$ and $\lambda_b\geq0$. Moreover, all Bloch vectors are of the form $\vec{u}_b=(\cos\theta_b,0,\sin\theta_b)$ with $|\vec{u}_b|=1$ (as the POVM is extremal).

Next, a maximisation of the entropy $H_{min}$ is performed over the free parameters, namely $\theta_1$, $\theta_2$ and $\lambda_1$, for different values of the overlap $\delta$. This is implemented as a heuristic optimization over the free parameters; for each set of parameters, a lower bound on the entropy is obtained via a semi-definite program, as in Ref. \cite{Brask}. We find that, in the regime where the two states become almost indistinguishable (i.e. $\delta \rightarrow 1$), the entropy approaches $H_{min}=\log_2(3)$. This shows that the bound of Theorem 1 is tight in this case. The optimal POVM can be parametrized as follows: $\theta_1=0$, $\lambda_2=\lambda_3 = \lambda$ $\theta_2=-\theta_3=\arccos(\frac{1}{\lambda}-1)$, $\lambda_1=2(1-\lambda)$, where 
\begin{equation}
\lambda = \frac{0.7323\delta^3-6.077\delta^2+4.017\delta+5.742}{\delta^3-7.645\delta^2+4.903\delta+7.147}.
\end{equation}

Finally, as this setup can achieve $H_{min}>1$ while using only two preparations, it allows one to perform randomness expansion, i.e. the amount of output randomness is larger than the one used for generating the input $x$.

\section{Discussion}

We presented an upper bound on the amount of randomness that can be generated in a black-box scenario. This bound is given by the number of different quantum states that enter the measurement device, irrespective of whether these states are fully, partially, or uncharacterised, and holds with and without classical or quantum side information. Hence, even when considering measurements with a large number of outputs (or even infinite as in CV systems), the amount of randomness that be generated is still limited by the source, specifically by the number of different preparations. The number of preparations required scales exponentially with the number of random bits to be certified per round. Thus, while generating a large number of random bits per round is in theory possible, this would be challenging in practice. Indeed, in any experiment, the number of rounds is finite which in turn limits the number of possible different preparations (even more so if good statistics is required). For instance, in order to generate 10 random bits per round (not even including here randomness extraction), more than $10^3$ different preparations would be required.

\sectionaps{Acknowledgments}
We thank Matt Pusey for helpful comments. Financial support by the Swiss National Science Foundation (Starting grant DIAQ, Bridge project ``Self-testing QRNG'', NCCR-QSIT) and the EU Quantum Flagship project QRANGE is gratefully acknowledged.

\bibliography{USDrandomness}

\end{document}